\newtheorem{thm}{Theorem}
\newtheorem{defn}{Definition}
\newtheorem{definition}{Definition}[section]
\newtheorem{lem}{Lemma}
\newtheorem{rem}{Remark}
\newenvironment{proof}{\noindent{\bf Proof:}}{$\hfill \Box$ \vspace{10pt}}
\begin{document}

\begin{frontmatter}

\title{A new fractional model in Caputo sense for studying the dynamics of COVID-19 spread in France}
\author[1]{Mahmoud H. A. Saleh\corref{cor1}}
\ead{mahmoud.h.a.saleh@gmail.com}
\author[2]{Tarek M. Abed-Elhameed}
\ead{tarekmalsbagh@aun.edu.eg}
\cortext[cor1]{corresponding author}
\address[1]{Department of Physics, Faculty of Science, Assiut University, Assiut 71516, Egypt.}
\address[2]{Department of Mathematics, Faculty of Science, Assiut University, Assiut 71516, Egypt.}
\date{Received: date / Accepted: date}

\begin{abstract}
The COVID-19 pandemic has rapidly spread around the world and burdened public health in almost all countries involving France. After the spread of SARS-CoV-2, France harvested many deaths in total. In this paper, we develop models with integer and fractional orders to investigate the dynamics of COVID-19 transmission in French hospitals and intensive care units (ICUs). Moreover, this paper aims to explore the impact of precautionary measures on the total infected cases in hospitals and ICUs of COVID-19 for the entire France by using available actual data.

{\bf{Keywords:}} \emph{Mathematical modeling; SIHUR model; COVID-19; Intensive Care Unit; Social distancing; Numerical results.}
\end{abstract}
\end{frontmatter}

	
\section{Introduction}
{The novel coronavirus has been commonly known as a severe acute respiratory syndrome novel coronavirus 2 (SARS-CoV-2) or coronavirus disease 2019 (COVID-19). The pandemic onset began on 31 December 2019 when the first case was detected in Wuhan city, Hubei Province of China~\cite{world2020situation}. From January 2020, the number of COVID-19 cases has increased rapidly. The SARS-CoV-2  has affected 213 countries and territories around the world besides two international conveyances. By 27 July 2020, the total cumulative COVID-19 cases reached 16,638,667, and the total number of deaths became 656,924~\cite{worldometer2020covid}.
	
On 24 January 2020, the virus was confirmed to have reached France where the first three cases of COVID-19 were identified. Two cases of them were diagnosed in Paris and one in Bordeaux~\cite{stoecklin2020first, lefigaro2020france}. But it appeared to us, subsequently, that there were already some cases exhibiting in December 2019~\cite{kamdar2020return}. Also, the basic reproduction number value in France was estimated to $R_0 \approx 2.96$ as in~\cite{salje2020estimating} and $R_0 \approx 3.2$ in~\cite{chen2020mathematical} before lockdown on 17 March 2020. After lockdown in France, we found that the effective in reducing the spread of COVID-19 appeared and the effective reproduction number, called $R_t$, divided by a factor 5–7 at the country scale by 11 May~\cite{roques2020impact, salje2020estimating}.
	
In general, intensive-care units (ICUs) treated highly miscellaneous patients at high peril of death-rate~\cite{nielsen2019survival}. Previous studies~\cite{pandharipande2013long, tansey2007one, bienvenu2018psychiatric, fan2014physical} of many critically ill patients who survived from intensive care suffered from subtending impaired outcomes in long-lasting physical, cognitive and/or mental health. In addition, the study~\cite{korupolu2020rehabilitation} recommended the rehabilitation of critically ill COVID-19 survivors. The study~\cite{kamdar2020return} showed that the survivors from ICU hospitalization showed a delay in returning to work, resulting in substantial economic sequels. Another study~\cite{armstrong2020outcomes} showed that the mortality rate of COVID-19 patients in ICUs, across the international studies, was found to be $41.6 (34.0 - 49.7)\%$ \, ($95\%$CI).
	
Mathematical modeling has a significant effect on providing simulating for the epidemic diseases and other several complex phenomena that vividly appeared in the literature such as~\cite{naik2020modeling, owolabi2020mathematical}. Many papers, such as~\cite{kucharski2020early, chen2020mathematical, ivorra2020mathematical} \cite{ndairou2020mathematical, ngonghala2020mathematical}, were published to analyze the spread of SARS-CoV-2 using mathematical models. Also, some papers dealt with mathematical modeling of COVID-19 involving hospitalization class as in~\cite{nabi2020forecasting, benlloch2020effect} and including critically-infected cases such as~\cite{ullah2020modeling}.
	
In recent years, with the continued development of fractional calculus theory and fractional-order system modeling approaches, one can see how important it is in various engineering and other fields \cite{cao2018fractional, west2016fractional}. Besides that, many scholars have researched COVID-19 in the context of fractional mathematical modeling~\cite{xu2020forecast, nisar2020mathematical, khan2020dynamics}. Fractional derivatives have more accuracy in describing the behavior of biological systems. And that returns to it has the information history of the whole time interval or long memory than the short one of integer derivatives.

This manuscript is organized in the following manner:
\begin{itemize}
	\item Sec.~\eqref{sec100}, we present the equations of the integer-order SIHUR model and its initial conditions.
	\item  In Sec.~\eqref{sec101}, we study the essential analytical procedures such as positivity and boundedness, in subsection~\eqref{sec102}, in addition to the invariant region set in~\eqref{sec103}.
	\item Sec.~\eqref{sec104}, we propose the fractional-order SIHUR model in Caputo sense together with the initial conditions~\eqref{sec105}.
	\item With Sec.~\eqref{sec106}, we determine the disease-free equilibrium point (DFEP) to get the basic reproduction number $R_0$ and study its stability as in subsections~\eqref{sec107} and~\eqref{sec108}.
	\item In Sec.~\eqref{sec109}, sensitivity analysis and estimation of the SIHUR parameters are determined.
	\item The last two remaining sections~\eqref{sec110} and~\eqref{sec111}, and we discuss the numerical simulations and conclude the core of this study, respectively.
\end{itemize}
\section{Model formulation}\label{sec100}
We designed our SIHUR model of SARS-CoV-2 spread, which corresponds to the data collection, as shown in the following equations.
\ref{f100}.
\begin{figure}[h]
\begin{center}
\centering
\includegraphics[scale=0.5]{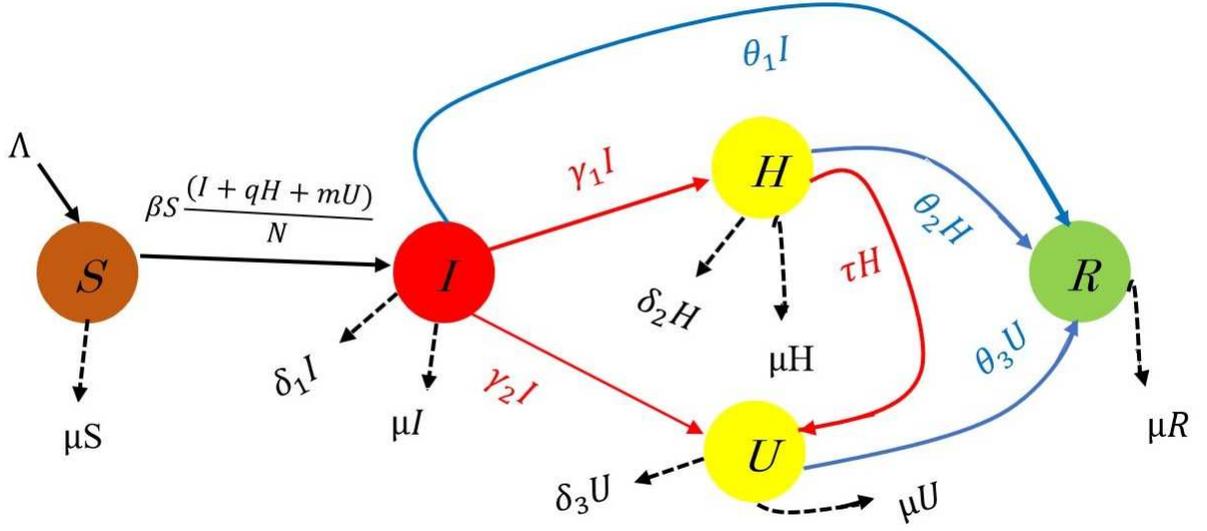}\\
\caption{SIHUR model.}
\label{f100}
\end{center}
\end{figure}
\begin{center}
\centering
\large
\begin{align}
\frac{dS}{dt} &=\Lambda -\varphi S-\mu S \nonumber,\\
\nonumber \\
\frac{dI}{dt} &=\varphi S - k_1 I  \nonumber,\\
\nonumber \\
\frac{dH}{dt} &=\gamma_1 I- k_2 H \label{E099},\\
\nonumber \\
\frac{dU}{dt} &=\gamma_2 I+\tau H- k_3 U \nonumber,\\
\nonumber \\
\frac{dR}{dt} &=\theta_1 I+\theta_2 H+\theta_3 U- \mu R. \nonumber
\end{align}
\end{center}
Subject to the initial conditions to the model~\eqref{E099}:\\
$S(0)\geq0,~I(0)\geq0,~H(0)\geq0,~U(0)\geq0,~R(0)\geq0$.\\
Where:\\
 $\varphi = \frac{\beta (I+ q H + m U)}{N}$, $k_1=(\gamma_1 +\gamma_2+\theta_1+\delta_1+\mu)$, $k_2=(\tau +\theta_2+\delta_2+\mu)$ and $k_3=(\theta_3 +\delta_3+ \mu)$.\\
 
 \begin{enumerate}
	\item The description of the divided classes of the whole human population$N(t)$, where\\
	$N(t)=S(t)+I(t)+U(t)+H(t)+R(t)$, is as below:
	\begin{itemize}
		\item The susceptible individuals density class is S(t).
		\item The infected class with the COVID-19 symptoms is I(t).
		\item The hospitalized patients class is H(t).
		\item The patients in intensive care units (ICUs) or (critically infected cases) class is U(t).
		\item The recovered$/$removed individuals class is R(t).
	\end{itemize}
	
	\item The description of the parameters:
	\begin{itemize}
		\item $\Lambda$ is the birth rate.
		\item $\mu$ is the natural death rate in all classes.
		\item $\theta_1$, $\theta_2$ and $\theta_3$ represent the recovery rates of all infected cases in the class $I$, hospitalized (in-hospital) patients in the class $H$ and in-ICUs patients in the class $U$ respectively.
		\item $\delta_1$, $\delta_2$ and $\delta_3$ are the mortality rates due to the COVID-19 in the $I$, $H$ and $U$, respectively.
		\item $\beta$ shows the transmission rate of the infectious disease, COVID-19.
		\item $\gamma_1$ denotes the transmissibility rate of the infected people with COVID-19 who need hospitalization.
		\item $\gamma_2$ denotes the transmissibility rate of the infected people with COVID-19 who need the ICUs.
		\item $\tau$ denotes the transmissibility rate of the hospitalized people with COVID-19 to the ICUs.
		\item $q$ is the infection rate due to health care workers (HCWs) interacting with hospitalized COVID-19 patients in hospitals.
		\item $m$ hints at the infection rate because of contact between the HCWs and the critical care patients in ICUs.
	\end{itemize}
\end{enumerate}

\begin{rem}
	We assume $q$ and $m$ represent the infection rate parameters between HCWs and infected patients in hospitals and ICUs, respectively. That interaction between them generates new COVID-19 ones among the HCWs. Then, this possibly generates further new cases out of hospitals and ICUs.
\end{rem}

\section{Some basic analytical results}\label{sec101}
Herein, we present some necessary analytical results of the SIHUR model~\eqref{E099} of COVID-19, such as positivity and boundedness for solutions of our model, and stability of the disease-free equilibrium point (DFEP) after presenting the fractional model. Moreover, we complete the basics by establishing the theoretical formula of the crucial biological parameter called the basic reproduction number, $R_0$, of the fractional SIHUR model.

\subsection{Positivity and boundedness of the solution}\label{sec102}
For the first part, we examine the solution positivity of our model.
\begin{lem}\label{lem100}
	Let the initial data be $P(0) \geq 0$ and $P(t) = (S, I, H, U, R)$ are the variables of the model. Then, all the solutions of the model~\eqref{E099} will be non-negative for all $t > 0$. Further,\\
	\begin{equation*}
	\lim_{t \to \infty} \sup_{}N(t) \leq \frac{\Lambda}{\mu}.
	\end{equation*}
\end{lem}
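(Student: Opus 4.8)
The plan is to establish the two assertions in turn: first the invariance of the non-negative cone $\{S,I,H,U,R\ge 0\}$, and then the asymptotic estimate on $N(t)$ obtained by summing the five equations of \eqref{E099}.

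For positivity I would rewrite each equation in the form ``non-negative source minus linear decay''. One has $\dot S=\Lambda-(\varphi+\mu)S$, and the inequalities $\dot I\ge -k_1 I$, $\dot H\ge -k_2 H$, $\dot U\ge -k_3 U$, $\dot R\ge -\mu R$ hold \emph{as soon as} $S,I,H,U\ge 0$, because the dropped inflow terms $\varphi S$, $\gamma_1 I$, $\gamma_2 I+\tau H$ and $\theta_1 I+\theta_2 H+\theta_3 U$ are then non-negative; note also that $\varphi=\beta(I+qH+mU)/N\ge 0$ exactly on the set where $I,H,U\ge 0$. Integrating these with the appropriate integrating factors gives, on any interval on which the solution exists and remains in the non-negative cone,
\[
S(t)\ge S(0)\,e^{-\int_0^t(\varphi+\mu)\,d\xi},\qquad I(t)\ge I(0)\,e^{-k_1 t},\qquad H(t)\ge H(0)\,e^{-k_2 t},
\]
and similarly for $U$ and $R$, all of which are $\ge 0$. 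A standard continuity/bootstrap argument then closes the loop: if $T$ is the supremum of times up to which every component stays $\ge 0$ and $T$ were strictly smaller than the existence time, some component would reach $0$ at $t=T$ while the others are still $\ge 0$, but the corresponding exponential lower bound above forces that component to remain $\ge 0$ slightly past $T$, contradicting maximality of $T$. One also needs $N(t)>0$ so that $\varphi$ is well defined; this follows from $N(0)>0$ together with the lower bound on $N$ implied by $\dot N\ge\Lambda-\mu N$ derived below.

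For the bound I would add the five equations of \eqref{E099}. The terms $\pm\varphi S$ cancel, and each internal transfer ($\gamma_1 I$ from $I$ into $H$; the $\gamma_2 I$ and $\tau H$ into $U$; and $\theta_1 I,\theta_2 H,\theta_3 U$ into $R$) is removed by the matching outflow hidden in $k_1,k_2,k_3$, leaving
\[
\frac{dN}{dt}=\Lambda-\mu N-\delta_1 I-\delta_2 H-\delta_3 U\;\le\;\Lambda-\mu N,
\]
where the inequality uses $I,H,U\ge 0$ from the first part and $\delta_i\ge 0$. Comparing with the scalar linear equation $\dot y=\Lambda-\mu y$ (Gr\"onwall's inequality, or simply solving it explicitly) yields
\[
N(t)\le N(0)\,e^{-\mu t}+\frac{\Lambda}{\mu}\bigl(1-e^{-\mu t}\bigr),
\]
and letting $t\to\infty$ gives $\limsup_{t\to\infty}N(t)\le \Lambda/\mu$, as claimed.

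The only genuinely delicate point is the nonlinear incidence term $\varphi=\beta(I+qH+mU)/N$: it is what forces the bootstrap argument (rather than a one-line sign check) in the positivity proof, and it is also the reason one must separately verify that $N$ never reaches $0$. Once these are handled, the remaining computations are routine.
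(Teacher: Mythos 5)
Your proposal is correct and follows essentially the same route as the paper: positivity via integrating factors and exponential lower bounds for each component, closed by a maximal-time (supremum) bootstrap, and the asymptotic bound on $N$ by summing the five equations and comparing with the scalar equation $\dot{y}=\Lambda-\mu y$. If anything, you are more careful than the paper's own proof, which writes out only the $S$-equation explicitly, dismisses the remaining components with ``similar procedure,'' and never addresses the well-definedness of $\varphi$ when $N$ could vanish.
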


\begin{proof}{
		Let $t_{*} = \sup_{} \{t > 0 : P(t) > 0 \in [0, t] \}$. Then, it follows the first Eq of our model written as shown:\\
		\begin{equation}\label{Epostive101}
		\frac{dS}{dt} =\Lambda-(\varphi + \mu) S.
		\end{equation}\\
		Multiply both sides of the above Eq.~\eqref{Epostive101} by the integrating factor, $\mathrm{exp}\left(\mu t + \int_{0}^{t} \varphi(\zeta) d\zeta \right)$. Then, it can be further written as below,\\
		\begin{equation*}
		\frac{d}{dt} \left\{ S(t)\times \mathrm{exp}\left(\mu t + \int_{0}^{t} \varphi(\zeta) d\zeta \right) \right\} = \Lambda\times \mathrm{exp}\left( \mu t + \int_{0}^{t} \varphi(\zeta) d\zeta \right).
		\end{equation*}\\
		In consequence,\\
		\begin{equation}\label{E107}
		S(t_{*})\times \mathrm{exp}\left(\mu t_{*} + \int_{0}^{t_{*}} \varphi(\zeta) d\zeta \right) - S(0) = \Lambda \int_{0}^{t_{*}}  \mathrm{exp}\left(\mu t + \int_{0}^{t} \varphi(\nu) d\nu \right) dt.
		\end{equation}\\
		Then, the solution of Eq. (\ref{E107}) is\\
		\begin{align}
		S(t_{*}) = S(0)&\times \mathrm{exp}\left\{ - \left(\mu t_{*} + \int_{0}^{t_{*}} \varphi(\zeta) d\zeta \right) \right\} + \mathrm{exp}\left\{ - \left(\mu t_{*} + \int_{0}^{t_{*}} \varphi(\zeta) d\zeta \right) \right\}\\
		&\times \Lambda \int_{0}^{t_{*}}  \mathrm{exp}\left(\mu t + \int_{0}^{t} \varphi(\nu) d\nu \right) dt > 0. \nonumber
		\end{align}\\
		Doing similar procedure for the rest of the Eqs. of the model~\eqref{E099}, it can be proved that $P(t)  > 0$, where $\forall t > 0$.\\
		
		For examining the second part, the boundedness property, we have $0 < P(0) \leq N(t)$. Adding all the equations~\eqref{E099} of the system , we get\\
		\begin{align}
		\frac{dN(t)}{dt} &= \Lambda - \mu N(t) - \delta_1 I - \delta_2 H - \delta_3 U \nonumber \\ 
		&\leq \Lambda - \mu N(t). \nonumber
		\end{align}
		Hence, \\
		\begin{equation*}
		\lim_{t \to \infty} \sup_{}{N(t)} \leq \frac{\Lambda}{\mu}.
		\end{equation*}
	}
\end{proof}

\subsubsection{Invariant region}\label{sec103}
Over here, we will study the dynamics of the COVID-19 mathematical model in the following closed biologically feasible region.
\begin{center}
	\begin{equation*}
	\Upsilon \subset \mathbb{R}_{+}^5,
	\end{equation*}
\end{center}
where,\\
$\Upsilon = \left\{(S, I, H, U, R)\in\mathbb{R}_{+}^5: S+ I+ H+ U+R \leq \frac{\Lambda}{\mu} \right\}$.\\
\begin{lem}
	The region given by $\Upsilon$, where $\Upsilon \subset \mathbb{R}_{+}^5$, is positively invariant set for the model~\eqref{E099} with non-negative initial conditions in $\mathbb{R}_{+}^5$
\end{lem}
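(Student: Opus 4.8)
The plan is to show that the vector field associated with the system~\eqref{E099} points inward (or is tangent) on the boundary of $\Upsilon$, which together with Lemma~\ref{lem100} guarantees that trajectories starting in $\Upsilon$ remain there for all $t>0$. First I would recall from Lemma~\ref{lem100} that solutions with non-negative initial data stay in $\mathbb{R}_+^5$, so the only boundary of $\Upsilon$ that needs attention is the face where $N = S+I+H+U+R = \Lambda/\mu$. On that face I would examine the total-population equation already derived in the proof of Lemma~\ref{lem100},
\begin{equation*}
\frac{dN}{dt} = \Lambda - \mu N - \delta_1 I - \delta_2 H - \delta_3 U \leq \Lambda - \mu N,
\end{equation*}
and observe that evaluating at $N = \Lambda/\mu$ gives $\frac{dN}{dt} \leq -\delta_1 I - \delta_2 H - \delta_3 U \leq 0$, so $N$ cannot increase past $\Lambda/\mu$.

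The key steps, in order, would be: (1) invoke positivity from Lemma~\ref{lem100} to confine solutions to $\mathbb{R}_+^5$; (2) write the differential inequality for $N(t)$ and integrate it via Gronwall's inequality to obtain
\begin{equation*}
N(t) \leq N(0)\,e^{-\mu t} + \frac{\Lambda}{\mu}\bigl(1 - e^{-\mu t}\bigr);
\end{equation*}
(3) note that if $N(0) \leq \Lambda/\mu$ then the right-hand side is a convex combination of $N(0)$ and $\Lambda/\mu$, hence $\leq \Lambda/\mu$ for every $t \geq 0$; (4) conclude that $(S,I,H,U,R)(t) \in \Upsilon$ for all $t > 0$, i.e. $\Upsilon$ is positively invariant. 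One may also phrase step (2)--(3) qualitatively: on the boundary face $N = \Lambda/\mu$ the outward normal is the all-ones vector and its inner product with the vector field is $dN/dt \leq 0$, so the flow does not exit $\Upsilon$.

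I do not expect a serious obstacle here; the argument is essentially a corollary of the computation already carried out for Lemma~\ref{lem100}. The only point requiring a little care is the logical structure: positivity (the other faces of $\Upsilon$) must be cited from Lemma~\ref{lem100} rather than re-proved, and one should be explicit that the estimate $\delta_1 I + \delta_2 H + \delta_3 U \geq 0$ used to bound $dN/dt$ relies on that same positivity. Beyond that, the Gronwall step is routine and the convex-combination observation makes the invariance immediate, so the proof is short.
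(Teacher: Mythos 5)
Your proposal is correct and follows essentially the same route as the paper: both sum the equations to obtain $\frac{dN}{dt} \leq \Lambda - \mu N$, integrate this to get $N(t) \leq N(0)e^{-\mu t} + \frac{\Lambda}{\mu}(1-e^{-\mu t})$, and conclude $N(t) \leq \frac{\Lambda}{\mu}$ whenever $N(0) \leq \frac{\Lambda}{\mu}$. Your additional remarks (explicitly citing positivity from Lemma~\ref{lem100} for the coordinate faces and the convex-combination observation) only make the argument slightly more careful than the paper's version.
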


\begin{proof}{
		As in Lemma (\ref{lem100}), it follows from the adding of all equation of the human population of the model~\eqref{E099} \\
		\begin{equation}\label{E106}
		\frac{dN(t)}{dt} \leq \Lambda - \mu N(t).
		\end{equation}\\
		So, it is vividly that\\
		\begin{equation*}
		\frac{dN(t)}{dt} \leq 0,~ if~ N(0) \geq \frac{\Lambda}{\mu}.
		\end{equation*}\\
		And the solution of (\ref{E106}) is given by the following inequality,\\
		\begin{equation*}
		N(t) \leq N(0)e^{- \mu t} + \frac{\Lambda}{\mu} (1- e^{- \mu t}).
		\end{equation*}\\
		Then, $N(t) \leq \frac{\Lambda}{\mu}$, if $N(0) \leq \frac{\Lambda}{\mu}$. Thus, the region $\Upsilon$ is positively invariant set and also attract all the possible solution trajectories in $\mathbb{R}_{+}^5$.
	}
\end{proof}
\section{Mathematical formulation of the fractional SIHUR model of COVID-19}\label{sec104}
In this section, we discuss the basic literature related to the fractional operator and its applications to the SIHUR model of COVID-19.
\subsection{Preliminaries of fractional calculus}\label{sec105}
Some useful definitions and lemmas are shown as follows
\begin{defn}\cite{podlubny1998fractional}
	The definition of Caputo fractional-order derivative is given by
	\begin{equation}\label{E109}
	\mathrm{}_{0}^{C}\mathrm{D}_{t}^{\alpha} f(t) = 
	\begin{cases}
	\text{$\frac{1}{\Gamma (\nu-\alpha)} \int_{0}^{t}\frac{f^{(\nu)}(\xi)}{(t-\xi)^{1+\alpha -\nu}}d\xi$,} &\quad\text{$\nu-1 < \alpha < \nu$,}\\
	\text{$\frac{d^{\nu}}{dt^{\nu}} f(t)$,} &\quad\text{$\alpha =\nu$}, \\
	\end{cases}
	\end{equation}
	where $\nu$ presents the smallest positive integer not less than $\alpha$.
\end{defn}
\begin{defn}\cite{podlubny1998fractional, li2009mittag}
	The Mittag–Leffler function $E_{\alpha}(Z)$ is given by
	\begin{equation*}
	E_{\alpha}(Z) = \sum_{\varkappa =0}^{\infty} \frac{Z^{\varkappa}}{\Gamma (\varkappa \alpha +1)},
	\end{equation*}
	where $\nu-1 \leq \alpha \leq \nu$
\end{defn}
\subsection{SIHUR model}\label{sec106}
We present the dynamics of the COVID-19 model~\eqref{E099} in Caputo sense~\eqref{E109} and we have the following model:
\begin{center}
	\centering
	\large
	\begin{align}
	\mathrm{}_{0}^{C}\mathrm{D}_{t}^{\alpha} S(t) &=\Lambda -\varphi S-\mu S \label{E100},\\
	\nonumber \\
	\mathrm{}_{0}^{C}\mathrm{D}_{t}^{\alpha} I(t) &=\varphi S - k_1 I \label{E101},\\
	\nonumber \\
	\mathrm{}_{0}^{C}\mathrm{D}_{t}^{\alpha} H(t) &=\gamma_1 I- k_2 H \label{E102},\\
	\nonumber \\
	\mathrm{}_{0}^{C}\mathrm{D}_{t}^{\alpha} U(t) &=\gamma_2 I+\tau H- k_3 U \label{E103},\\
	\nonumber \\
	\mathrm{}_{0}^{C}\mathrm{D}_{t}^{\alpha} R(t) &=\theta_1 I+\theta_2 H+\theta_3 U- \mu R. \label{E104}
	\end{align}
\end{center}

The initial conditions for the model of Eqs.~(\ref{E100}-\ref{E104}) are:\\
$S(0)\geq0,~I(0)\geq0,~H(0)\geq0,~U(0)\geq0,~R(0)\geq0$,\\where:\\
$\alpha$ represents the fractional-order,
$\varphi = \frac{\beta (I+ q H + m U)}{N}$, $k_1=(\gamma_1 +\gamma_2+\theta_1+\delta_1+\mu)$, $k_2=(\tau +\theta_2+\delta_2+\mu)$ and $k_3=(\theta_3 +\delta_3+ \mu)$.\\
\section{The basic reproduction number}\label{sec107}
The basic reproduction number, which is known as $R_0$, is defined as the predictable number of secondary cases generated by an ideal infected individual in a totally susceptible population~\cite{heesterbeek2002brief}. We calculate $R_0$ via next-generation matrix approach, see~\cite{diekmann1990definition,van2002reproduction}. In order to calculate $R_0$, we first need to determine the disease-free equilibrium point (DFEP) point by setting the right-hand side of the Eqs. (\ref{E100}-\ref{E104}) equal to zero and obtain
\begin{center}
	\centering
	\large
	\begin{align}
	\mathrm{}_{0}^{C}\mathrm{D}_{t}^{\alpha} S(t) &=\Lambda -\varphi S-\mu S=0 \nonumber,\\
	\mathrm{}_{0}^{C}\mathrm{D}_{t}^{\alpha} I(t) &=\varphi S - k_1 I=0 \nonumber,\\
	\mathrm{}_{0}^{C}\mathrm{D}_{t}^{\alpha} H(t) &=\gamma_1 I- k_2 H=0 \nonumber,\\
	\mathrm{}_{0}^{C}\mathrm{D}_{t}^{\alpha} U(t) &=\gamma_2 I+\tau H- k_3 U=0 \nonumber,\\
	\mathrm{}_{0}^{C}\mathrm{D}_{t}^{\alpha} R(t) &=\theta_1 I+\theta_2 H+\theta_3 U- \mu R=0. \nonumber
	\end{align}
\end{center}
Then, the DFEP is the point in which there does not exist disease among all individuals in the studied population, so the disease-free equilibrium point is given by\\ $D_0 = (S, I, H, U, R) = (\frac{\Lambda}{\mu}, 0, 0, 0, 0)$.\\
The basic reproduction number is very crucial for the qualitative analysis of our model. We compute $R_0$ by the following steps: \\

First, we divide our system into infected class Eqs. ($\ref{E101}-\ref{E103}$) and non-infected class Eqs. \eqref{E100} and \eqref{E104}. Next,
\begin{center}
	\begin{equation}
	\frac{d}{dt}  \begin{bmatrix}
	I \\ H \\ U
	\end{bmatrix} = \psi - \zeta,
	\end{equation}
\end{center}
where 

$\psi = \begin{bmatrix}
\frac{\beta S(I+ q H + m U)}{N}  \\ 0 \\ 0
\end{bmatrix}$ and
$\zeta = \begin{bmatrix}
k_1 I \\ -\gamma_1 I + k_2 U \\ -\gamma_2 I - \tau H + k_3 U
\end{bmatrix}$.\\
Then,\\
$F=J_{D_0}(\psi) = \begin{bmatrix}
\beta & \beta q & \beta m \\ 0 & 0 & 0 \\ 0 & 0 & 0
\end{bmatrix}, $
$V=J_{e_0}(\zeta) = \begin{bmatrix}
k_2 k_3& 0 &0 \\ \gamma_1 k_3 & k_1 k_3 &0 \\ (\gamma_1 \tau +\gamma_2 k_2) &k_1 \tau & k_1 k_2
\end{bmatrix},$\\
and
$ V^{-1}=(J_{D_0}(\zeta))^{-1} = \begin{bmatrix}
\frac{1}{k_1} & 0&0\\ \frac{\gamma_1}{k_1 k_2} & \frac{1}{k_2}&0 \\ \frac{(\gamma_1 \tau + \gamma_2 k_2)}{k_1 k_2 k_3} &\frac{1}{k_2 k_3}& \frac{1}{k_3}
\end{bmatrix}.$\\

Thereafter, the basic reproduction number is determined from the formula $R_0 \coloneqq \rho[FV^{-1}]$. Where $\rho(.)$  is the spectral of the matrix $F=J_{D_0}(\psi)$ and $V^{-1}=(J_{D_0}(\zeta))^{-1}$.  $J_{D_0}$ represents the Jacobian matrix of the matrices $\psi$ and $\zeta$ at DFEP, $D_0$. $F V^{-1}$ is the next generation matrix. Accordingly, $R_0$ can be written afterwards:\\
\begin{equation}
R_0=\frac{\beta}{k_1} \bigg(1+q\frac{\gamma_1}{k_2}+m\frac{(\gamma_1 \tau + \gamma_2 k_2)}{k_2 k_3} \bigg).
\label{E108}
\end{equation}
Further, we can write $R_0$ in the following form:\\
\begin{equation*}
R_0 = R_1 + R_2 + R_3 +R_4 ,
\end{equation*}
where,
\begin{equation*}
R_1 = \frac{\beta}{k_1},~~~~~ R_2 = \beta q\frac{\gamma_1}{k_1 k_2}, ~~~~~ R_3=\beta m\frac{\gamma_1 \tau}{k_1 k_2 k_3},~~~~~ R_4= \beta m \frac{ \gamma_2}{k_1 k_3}.
\end{equation*}

\subsection{Stability of disease-free equilibrium point}\label{sec108}
In this section, we calculate both local and global stability of our system (\ref{E100}-\ref{E104}), see section~\ref{sec100}, around the disease-free equilibrium point (DFEP). We then prove the local and global stability of our proposed model around the DFEP. The epidemiological effect of the stability of the output of the DFEP state is when a small influx of COVID-19 infections cases does not cause a COVID-19 outbreak in the case $R_0<1$.

\subsubsection{Local stability of disease-free equilibrium, $D_0$}\label{sec109}
\begin{thm}
	If $R_0<1$, then the DFEP, $D_0$ of the model (\ref{E100}-\ref{E104}) is locally asymptotically stable if the condition $\mid arg(\lambda) \mid \textgreater \alpha \frac{\pi}{2}$ is satisfied and unstable otherwise.
\end{thm}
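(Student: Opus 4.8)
The plan is to linearize the fractional system (\ref{E100}-\ref{E104}) about the disease-free equilibrium $D_0=(\Lambda/\mu,0,0,0,0)$ and invoke the standard stability criterion for Caputo fractional-order systems: an equilibrium is locally asymptotically stable provided every eigenvalue $\lambda$ of the Jacobian satisfies $\lvert \arg(\lambda)\rvert > \alpha\pi/2$. First I would compute the full $5\times 5$ Jacobian $J(D_0)$ of the right-hand side of the system evaluated at $D_0$. Because the force of infection $\varphi = \beta(I+qH+mU)/N$ vanishes at $D_0$ (it is quadratic in the infected compartments), the linearization is block-triangular: the $S$ and $R$ equations decouple from the infected block, contributing the two obvious eigenvalues $\lambda = -\mu$ (with multiplicity two), both of which are negative reals and hence automatically satisfy the argument condition for any $\alpha\in(0,1]$.

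The core of the argument is the $3\times 3$ infected sub-block governing $(I,H,U)$, which is exactly $F-V$ in the notation already set up in Section~\ref{sec107}, namely
\begin{equation*}
J_{\text{inf}}=\begin{bmatrix} \beta-k_1 & \beta q & \beta m \\ \gamma_1 & -k_2 & 0 \\ \gamma_2 & \tau & -k_3 \end{bmatrix}.
\end{equation*}
I would then show that when $R_0<1$ all three eigenvalues of $J_{\text{inf}}$ have negative real part. The cleanest route is to appeal to the general next-generation result (van den Driessche--Watmough): since $F\ge 0$ and $V$ is a non-singular M-matrix, the spectral bound of $F-V$ is negative if and only if $\rho(FV^{-1})=R_0<1$; equivalently one can write down the characteristic polynomial $\det(\lambda I - J_{\text{inf}}) = \lambda^3 + a_1\lambda^2 + a_2\lambda + a_3$ and verify the Routh--Hurwitz conditions $a_1>0$, $a_3>0$, $a_1 a_2 > a_3$, exhibiting each $a_i$ as a positive expression in the rates multiplied by a factor proportional to $(1-R_0)$ where appropriate (in particular $a_3 = k_1k_2k_3(1-R_0)>0$). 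Any $\lambda$ with $\mathrm{Re}(\lambda)<0$ satisfies $\lvert\arg(\lambda)\rvert > \pi/2 \ge \alpha\pi/2$ for $\alpha\in(0,1]$, so the fractional stability criterion is met by all five eigenvalues, giving local asymptotic stability; if $R_0>1$ then $a_3<0$, forcing a positive real eigenvalue, hence instability.

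The main obstacle is the bookkeeping in the $3\times 3$ block: verifying the Routh--Hurwitz inequality $a_1a_2>a_3$ in full generality requires some care because $a_2$ mixes several rate products, and one must be sure the $(1-R_0)$ factorization is genuine rather than merely sufficient. Invoking the van den Driessche--Watmough theorem (already cited as~\cite{van2002reproduction}) sidesteps this entirely, since it directly equates the sign of the spectral abscissa of $F-V$ with the position of $R_0$ relative to $1$; I would present that as the primary argument and relegate the explicit Routh--Hurwitz check to a remark or a short verification. A minor point worth stating explicitly is that the classical Matignon criterion for Caputo systems requires the fractional order $\alpha$ to lie in $(0,1]$ (or more generally $\alpha\in(0,2)$ with the $\alpha\pi/2$ threshold), which should be recorded as a standing hypothesis so the conclusion $\lvert\arg(\lambda)\rvert>\alpha\pi/2$ is not vacuous.
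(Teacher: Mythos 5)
Your proposal follows the same overall architecture as the paper's proof --- the same $5\times 5$ Jacobian at $D_0$, the same two eigenvalues $-\mu$ from the decoupled $S$ and $R$ rows, the same residual cubic for the infected block, and the same final appeal to the Matignon condition $\lvert\arg(\lambda)\rvert>\alpha\pi/2$ --- but the decisive step is handled by a genuinely different (and in fact stronger) device. The paper computes the coefficients $C_1,C_2,C_3$ of the cubic \eqref{mat2}, observes that all three are positive when $R_0<1$, and concludes directly that the roots have negative real part; for a cubic this inference is incomplete, since positivity of the coefficients is necessary but not sufficient for the polynomial to be Hurwitz (one also needs $C_1C_2>C_3$, and e.g.\ $\lambda^3+\lambda^2+\lambda+3$ has all positive coefficients yet roots in the right half-plane). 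You identify exactly this issue and close it in two ways: primarily by invoking the van den Driessche--Watmough theorem \cite{van2002reproduction}, which, because $F\ge 0$ and $V$ is a nonsingular M-matrix, equates $\operatorname{sign}$ of the spectral abscissa of $F-V$ with the position of $R_0$ relative to $1$ and thereby delivers $\mathrm{Re}(\lambda)<0$ without any coefficient bookkeeping; and secondarily by flagging that an explicit Routh--Hurwitz verification must include the inequality $a_1a_2>a_3$, not merely the positivity of the $a_i$. What the paper's route buys is an explicit, self-contained display of the coefficients in terms of $R_1,\dots,R_4$; what your route buys is logical completeness, plus the instability claim for $R_0>1$ for free (a sign change in $a_3$ forces a positive real root). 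Your closing remark that the Matignon threshold $\alpha\pi/2$ presupposes $\alpha\in(0,1]$ (or $(0,2)$) is also a hypothesis the paper leaves implicit and is worth recording. In short: correct, same skeleton, but your key lemma is different and it repairs a real gap in the published argument.
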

\begin{proof}
	The Jacobian matrix $J_{D_0}$ obtained at the DFE $D_0$ for our system (\ref{E100}-\ref{E104}) as: 
	\begin{equation}\label{mat1}
	J_{D_0} = \begin{bmatrix}
	-\mu & -\beta & -\beta q & -\beta m & 0 \\
	0 & \beta-k_1 & \beta q & \beta m & 0 \\
	0 & \gamma_1 & -k_2 & 0 & 0 \\
	0 & \gamma_2 & \tau & -k_3 & 0 \\
	0 & \theta_1 & \theta_2 & \theta_3 & -\mu \\
	\end{bmatrix},
	\end{equation}
	Vividly, from the matrix \eqref{mat1} the eigenvalues, $-\mu$ and$-\mu$ are negative values. The remaining eigenvalues obtained from the below equation:\\
	\begin{equation}\label{mat2}
	\lambda^3+C_1 \lambda^2+C_2 \lambda+C_3 =0.
	\end{equation}
	The coefficients involved in Eq. \eqref{mat1} are as follow:\\
	\begin{align*}
	C_1&=k_1 + k_2 + k_3 - k_1 R_1,\\
	C_2&= k_2 k_3 +k_1 k_2 (1-R_2)+ k_1 k_3 (1-R_4) - k_1 (k_2 + k_3) R_1 ,\\
	C_3&=k_1 k_2 k_3 (1-R_0).
	\end{align*}
	
	Vividly,  $\forall~ C_i$, where $i=1, 2, 3$ are positive if $R_0 < 1$. Then, the eigenvalues from Eq. \eqref{mat2} are negative or have a negative real part. Since all the eigenvalues $\lambda_i$, ($i = 1, 2, ..., 6$), of \ref{mat1} satisfy the condition $\mid arg(\lambda) \mid \textgreater \alpha \frac{\pi}{2}$, the DFEP ($D_0$) is locally asymptotically stable.
\end{proof}

\subsubsection{Global stability of disease-free equilibrium, $D_0$}\label{sec110}
The global stability of the system at the DFEP, $D_0$, of the COVID-19 transmission model is investigated in the following consequence:
\begin{thm}
	The non-linear dynamical system (\ref{E100}-\ref{E104}) at the DFEP ($D_0$) is globally asymptotically stable in $\Upsilon$ when $R_0<1$, and unstable in the case $R_0>1$.
\end{thm}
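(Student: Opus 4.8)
The plan is to construct a Lyapunov function on the infected compartments and apply LaSalle's invariance principle, adapted to the Caputo fractional setting. First I would recall the fractional Lyapunov comparison lemma: if $V$ is a suitable $C^1$ function with $V(D_0)=0$, $V>0$ elsewhere on $\Upsilon$, and ${}_0^C\mathrm{D}_t^\alpha V(t)\le 0$ along trajectories, then $D_0$ is stable, and trajectories converge to the largest invariant set where ${}_0^C\mathrm{D}_t^\alpha V=0$. I would also invoke the standard estimate ${}_0^C\mathrm{D}_t^\alpha\big(x(t)\big)\le$ (linear-in-$x$ expression) obtained from the known inequality ${}_0^C\mathrm{D}_t^\alpha\big(a x - a x^* - a x^*\ln(x/x^*)\big)\le \big(1-x^*/x\big){}_0^C\mathrm{D}_t^\alpha x$, though here, since $D_0$ has zero infected components, a simple linear combination $V = a_1 I + a_2 H + a_3 U$ with positive constants $a_i$ will suffice and keeps the computation clean.

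Second, I would compute ${}_0^C\mathrm{D}_t^\alpha V$ along solutions of \eqref{E101}--\eqref{E103}. Using $S \le N \le \Lambda/\mu$ on $\Upsilon$ (from Lemma~\ref{lem100} and the invariant region lemma), one bounds $\varphi S = \frac{\beta S (I + qH + mU)}{N} \le \beta(I + qH + mU)$. Substituting the right-hand sides gives
\begin{equation*}
{}_0^C\mathrm{D}_t^\alpha V \le a_1\big(\beta(I+qH+mU) - k_1 I\big) + a_2(\gamma_1 I - k_2 H) + a_3(\gamma_2 I + \tau H - k_3 U).
\end{equation*}
The coefficients $a_1, a_2, a_3$ are then chosen (left-eigenvector of the next-generation structure, or equivalently the components of a row vector annihilating $V - FV^{-1}$ appropriately) so that the coefficients of $H$ and $U$ vanish or are non-positive and the coefficient of $I$ becomes proportional to $(R_0 - 1)$. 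Concretely, taking $a_3 = 1$, $a_2 = \frac{\gamma_1 m \beta}{k_2 k_3} + \frac{a_3 \tau}{k_2}$-type values (the same combinations appearing in $R_0$ as in \eqref{E108}) collapses the estimate to ${}_0^C\mathrm{D}_t^\alpha V \le k_1 a_1 (R_0 - 1) I \le 0$ whenever $R_0 < 1$.

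Third, I would identify the invariant set: ${}_0^C\mathrm{D}_t^\alpha V = 0$ forces $I = 0$ when $R_0<1$, and then the equations \eqref{E102}--\eqref{E104} successively force $H\to 0$, $U\to 0$, $R\to 0$, while \eqref{E100} drives $S\to\Lambda/\mu$; hence the largest invariant subset of $\{{}_0^C\mathrm{D}_t^\alpha V=0\}\cap\Upsilon$ is the singleton $\{D_0\}$. By the fractional LaSalle principle, every trajectory starting in $\Upsilon$ converges to $D_0$, giving global asymptotic stability. For the instability claim when $R_0>1$, I would point back to the local analysis: the coefficient $C_3 = k_1 k_2 k_3(1-R_0)$ in \eqref{mat2} becomes negative, so the cubic has a positive real root, violating $|\arg(\lambda)|>\alpha\pi/2$, hence $D_0$ is unstable.

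The main obstacle I anticipate is making the fractional LaSalle/Lyapunov step rigorous rather than heuristic: the Caputo derivative does not obey the ordinary product/chain rules, so I must cite the correct comparison lemma (e.g.\ the Volterra-type fractional Lyapunov theorem) and verify its hypotheses — in particular that $V$ is radially unbounded on $\Upsilon$ and that the solution remains in the compact invariant region $\Upsilon$, which is exactly what the two earlier lemmas provide. A secondary technical point is justifying the bound $\varphi S \le \beta(I+qH+mU)$ uniformly; this needs $N \ge S$, i.e.\ that the other compartments are non-negative, which again follows from Lemma~\ref{lem100}. Once these are in place, the algebraic choice of the $a_i$ is routine bookkeeping tied directly to the expression for $R_0$ already derived in \eqref{E108}.
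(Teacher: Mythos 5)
Your proposal follows essentially the same route as the paper: a linear Lyapunov function on the infected compartments $I,H,U$, the bound $\varphi S\le\beta(I+qH+mU)$ from $S\le N$, a choice of weights tied to $R_0$, and LaSalle's invariance principle. The one substantive difference is your choice of coefficients, and it is the better one. The paper takes $\Psi_1=\tfrac{1}{k_1}$, $\Psi_2=\tfrac{\beta m\tau}{k_1k_2k_3}$, $\Psi_3=\tfrac{\beta m}{k_1k_3}$, which kills the $U$-coefficient but leaves the $H$-coefficient equal to $\Psi_1\beta q-\Psi_2k_2+\Psi_3\tau=\tfrac{\beta q}{k_1}>0$, so its final estimate $\tfrac{dL}{dt}\le(R_1+R_3+R_4-1)I+qR_2H$ contains a strictly positive term and the claimed conclusion $\tfrac{dL}{dt}<0$ for $R_0<1$ does not actually follow from the displayed inequality. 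Your prescription $a_3=a_1\beta m/k_3$, $a_2=a_1(\beta q+\beta m\tau/k_3)/k_2$ (i.e.\ folding the $\beta q$ contribution into the $H$-weight) makes both the $H$- and $U$-coefficients vanish and collapses the estimate to ${}_0^C\mathrm{D}_t^{\alpha}V\le a_1k_1(R_0-1)I\le0$, which closes the argument cleanly. You also correctly flag two issues the paper elides: the derivative of the Lyapunov function should be the Caputo derivative (the paper differentiates $L$ classically even though the dynamics are fractional), requiring a fractional Lyapunov/LaSalle comparison lemma; and the instability assertion for $R_0>1$ needs its own justification, which you supply via the sign of $C_3=k_1k_2k_3(1-R_0)$ in the characteristic cubic. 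Your version is therefore the same strategy executed more carefully, and it repairs the algebraic gap in the paper's own proof.
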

\begin{proof}{
		Let we suppose the following Lyapunov function to prove the desired consequence:\\
		\begin{equation*}
		L(t) = \Psi_1 I + \Psi_2 H + \Psi_3 U,
		\end{equation*}
		where $\Psi_j$, for $j=1, 2, 3$, are used for unknown positive constants. By differentiating the function $L(t)$ with respect to time $t$ and using the Eqs. (\ref{E101}-\ref{E102}), we get:
		\begin{align*}
		\frac{dL(t)}{dt}&= \Psi_1 \bigg\{ \frac{\beta S(I+ q H + m U)}{N}-(\gamma_1 +\gamma_2+\theta_1+\delta_1+\mu) I \bigg\}+\Psi_2 \bigg\{\gamma_1 I-(\tau +\theta_2+\delta_2+\mu) H \bigg\}\\ &+\Psi_3 \bigg\{\gamma_2 I+\tau H-(\theta_3 +\delta_3+ \mu) U \bigg\}.\\
		\end{align*}
		As $\frac{S}{N}<1$. Then, \\
		\begin{align*}
		\frac{dL(t)}{dt}&\leq \Psi_1 \bigg\{ \beta (I+ q H + m U)-(\gamma_1 +\gamma_2+\theta_1+\delta_1+\mu) I \bigg\}+\Psi_2 \bigg\{\gamma_1 I-(\tau +\theta_2+\delta_2+\mu) H \bigg\}\\ &+\Psi_3 \bigg\{\gamma_2 I+\tau H-(\theta_3 +\delta_3+ \mu) U \bigg\} = (\Psi_1 \beta-\Psi_1 \gamma_1-\Psi_1 \gamma_2 -\Psi_1 \theta_1-\Psi_1 \delta_1-\Psi_1 \mu+\Psi_2 \gamma_1+\Psi_3 \gamma_2) I\\&+(\Psi_1 \beta q-\Psi_2 \tau-\Psi_2 \theta_2-\Psi_2 \delta_2-\Psi_2 \mu+\Psi_3 \tau) H+(\Psi_1 \beta m-\Psi_3 \theta_3-\Psi_3 \delta_3-\Psi_3 \mu) U.
		\end{align*}
		Now, let $\Psi_1$, $\Psi_2$ and $\Psi_3$ values as below, \\
		$\Psi_1= \frac{1}{k_1}$, $\Psi_2=\beta m\frac{\tau}{k_1 k_2 k_3}$ and $\Psi_3= \beta m\frac{1}{k_1 k_3}$. \\
		Then, we get,
		\begin{equation*}
		\frac{dL(t)}{dt}\leq  (R_1 + R_3 + R_4 -1) I+ q (R_2) H.
		\end{equation*}
		Obviously, $\frac{dL(t)}{dt} < 0$ when $R_0 < 1$. Thus, the the largest compact invariant set in $\Upsilon$ is the singleton set $D_0$ according to the LaSalle’s invariant principle \cite{la1976stability}. Thus, $D_0$ is entirely globally asymptotically stable in $\Upsilon$.}
\end{proof}

\section{Sensitivity analysis and estimation of SIHUR parameters}\label{sec111}
In the previous Section~\ref{sec101}, we derived the basic reproduction number for the system (\ref{E100}-\ref{E104}) proposed in Section~\ref{sec100}. The sensitivity analysis for the threshold of pandemic ($R_0$) is significant. It is usually used to determine the robustness of the model predictions to parameter values, since there are common errors or uncertainty in data collection and assumed parameter values~\cite{saltelli2008global, chitnis2008determining, mcleod2006sensitivity}. The importance of sensitivity indices arises in measuring the relative change in a variable when a parameter value changes. Therefore, here we use the normalized forward sensitivity index of a variable with respect to a given parameter. This is defined as the ratio of the relative change in some variable to the relative change in the given parameter. When the variable is a differentiable function of the parameter, the sensitivity index maybe, on the other hand, be defined as the partial derivative of the variable with respect to the given parameter. We can write the definition as follows.
\begin{definition}\cite{chitnis2008determining}:
	The normalized forward sensitivity index of a variable, $R_0$, that depends differentiably on a given parameter, $\chi$, which is defined as:
	\begin{equation}
	\Omega_{\chi}^{R_0} = \frac{\partial R_0}{\partial \chi} \times \frac{\chi}{R_0}
	\label{e112}
	\end{equation}
\end{definition}
 The sensitivity indices, sources, parameters values and initial values are arranged in tables \ref{t100}-\ref{t102}.
\begin{table}[h]
	\centering
	\begin{tabular}{|c|c|c|}
		\hline
		\multicolumn{3}{|c|}{France} \\ \hline
		Symbol & Initial value & Method/ Source  \\ \hline
		$S(0)$  & 19948051 & $S(0) =N(0) - (I(0) +H(0) +U(0) +R(0))$ \\ \hline
		$I(0)$   & 29440000   & ~\cite{owidcoronavirus} \\ \hline
		$H(0)$  & 16785   & ~\cite{owidcoronavirus} \\ \hline
		$U(0)$  & 1074  & ~\cite{owidcoronavirus}\\ \hline
		$R(0)$  & 15867601  & ~\cite{numberscovid19france} \\ \hline
	\end{tabular}
	\caption{Initial values of the SIHUR model for France.}
	\label{t100}
\end{table}

\begin{table}[h]
	\centering
	\begin{tabular}{|c|c|c|c|}
		\hline
	 \multicolumn{4}{|c|}{France}                                            \\ \hline
		Parameter &Value &Sensitivity index&Method/ Source  \\ \hline
		$\mu$  & $\frac{1}{83.13 \times 365}$ & $-6.5048\times 10^{-4}$ & ~\cite{france2020population} \\ \hline
		$\Lambda$  &  $0.011$ & $-$ & ~\cite{ined2021population}  \\ \hline
		$\beta$ &$0.204$ & 1 &~\cite{sharma2022parameter} \\ \hline
		$\delta_1$ & $5.027\times 10^{-3}$   & $   -0.0992
		$ & \cite{owidcoronavirus}  \\ \hline
		$\delta_2$ & $0.2$ &$-2.33\times 10^{-5}$& ~\cite{boelle2020trajectories} \\ \hline
		$\delta_3$ & $0.416$  &$-1.4257\times 10^{-5}$& ~\cite{armstrong2020outcomes} \\ \hline
		$\theta_1$ & $0.045$  &$-0.8882$ & Assumption \\ \hline
		$\theta_2$ & $0.8$ & $-9.3242\times 10^{-5}$ & ~\cite{prescott2020recovery}   \\ \hline
		$\theta_3$ & $ 0.6 $ & $ -2.0564\times 10^{-5} $ & ~\cite{prescott2020recovery} \\ \hline
		$\gamma_1$ & $5.701 \times 10^{-4}$ &$-0.0111$ &~\cite{owidcoronavirus} \\ \hline
		$\gamma_2$ & $ -7.3793\times 10^{-4}$  & $-7.0205\times 10^{-4}$ &~\cite{owidcoronavirus}  \\ \hline
		$\tau$ & $0.064$ & $9.4130\times 10^{-6}$ &~\cite{owidcoronavirus} \\ \hline
		$q$  & $0.2$ &$1.0714\times 10^{-4}$ & Assumption \\ \hline
		$m$  & $0.5$ &$3.4822\times 10^{-5}$ & Assumption  \\ \hline
	\end{tabular}
	\caption{Sensitivity indices, data and values of parameters in the SIHUR model for France.}
	\label{t102}
\end{table}
\begin{rem}
	The natural mortality rate is defined as $\mu = \frac{1}{\emph{Life}~\emph{expectancy} \times 365}$.
\end{rem}

\section{Discussion and numerical results}\label{sec112}
This section was devoted to discussing the numerical simulation results of our integer order system of Eqs.~ (\ref{E100}-\ref{E104}) and fractional-order one~\ref{E109}. We performed the simulation results by using ode45 and fde12 packages in MATLAB, subjected to the parameters and initial values listed in Tables~\ref{t102} and~\ref{t100}, respectively. The number of population was from~\cite{france2020population} for France ($N(0)=65,273,511$). We studied the corresponding dynamics of the coronavirus spread in France for total COVID-19 hospitalized cases and critical care patients in ICUs for the period from $24^{th}$ May 2022 to $13^{th}$ July 2022 (50 days). And we also investigated the impact of precautionary measures, e. g., social distancing and wearing masks, on hospitals and ICUs in France for the same period.

We predicted the COVID-19 development trends as shown in Figs.~\ref{global fig101} with integer and non-integer order. In Fig.~\ref{f111}, we see that the total number of hospitalized cases decreases from $16785$ to nearly $16000$ in the first $4$ days, and then the number of hospitalizations is approximately constant until the $7$ day. After that, we observe a notable exponential decay until the day $50$. We also notice an acceptable match between the two carves for integer and non-integer one. We show the two curves are identical to nearly the day $10$. Next, from day $10$ to day $50$, a suitable match is indeed observed. For Fig.~\ref{f112}, a vast increase in the total number of critical care cases in ICUs is revealed, exceeding $2000$ cases, until the day $5$. Following this is a major decrease in cases of ICUs until the last day of our study. And, again, a satisfactory match between the two carves is noticed.

Precautionary measures are essential measures that several governments imposed. Naturally, these should be directly taken when an epidemic is at the onset. We, here, only study the impact of precautionary measures on the cumulative number of hospitalized and critical care COVID-19 cases, see Figs.~\ref{global fig103} and~\ref{global fig104}, by using our models proposed in Sec.~\ref{sec100} and~\ref{sec104}. We assume that the mitigation in precautionary measures, the transmission rate ($\beta$), increases by $0.05$. So, we have analyzed the impact of basic ($\beta = 0.204$), below moderate ($\beta = 0.254$), moderate ($\beta = 0.304$), upper moderate ($\beta =0.354$) and high ($\beta = 0.404$) precautionary measures.

In Figs.~\ref{f116} and~\ref{f118}, it is observed that mitigation in precautionary has a major effect on increasing the total number of COVID-19 cases in the hospitals for all values of $\beta$. This major effect is for the first $10$ days and the first $15$ days by the integer and fractional order models, respectively. After these periods, a minor effect of the mitigation is shown. We see a rapid decrease proceeds from days $10$ to $50$ in Fig.~\ref{f116} and from days $15$ to $50$ in Fig.~\ref{f118}. In Fig.~\ref{f116}, we find that for different values of $\beta$, all curves intercept in the period from days $24$ to $28$, where the number of cases is nearly $0.8\times 10^{4}$. The same intersection is shown in Fig.~\ref{f118}, but for a period from days $28$ to $32$, where cases are also ranging in $0.8\times 10^{4}$. In addition, for both Figs.~\ref{f116} and~\ref{f118}, after the intersection region, we see all curves very slightly separate from each other.


In Figs.~\ref{f117} and~\ref{f119}, we vividly see that the cumulative number of COVID-19 cases in French ICUs accelerates significantly on account of the decrease in precautionary measures, i. e., increment in the value of $\beta$. That is, where the number of cases surpasses $2000$. This increment is still until the first $5$ and $6$ days for Fig.~\ref{f117} and~\ref{f119}, respectively. And this is followed by a major decrease, continuing to the end of the period in both graphs,~\ref{f117} and~\ref{f119}. The curves of both mentioned two graphs, for different values of $\beta$, are nearly identical for period ($24$-$28$) days in~\ref{f117} and ($28$-$33$) days in~\ref{f119}. From days $28$ (Fig.~\ref{f117}) and $33$ (Fig.~\ref{f119}), for different values of $\beta$, curves are very close together from one another.


\begin{figure}[h]
	\centering
	\begin{subfigure}[ht]{0.7\linewidth}
		\includegraphics[width=\linewidth]{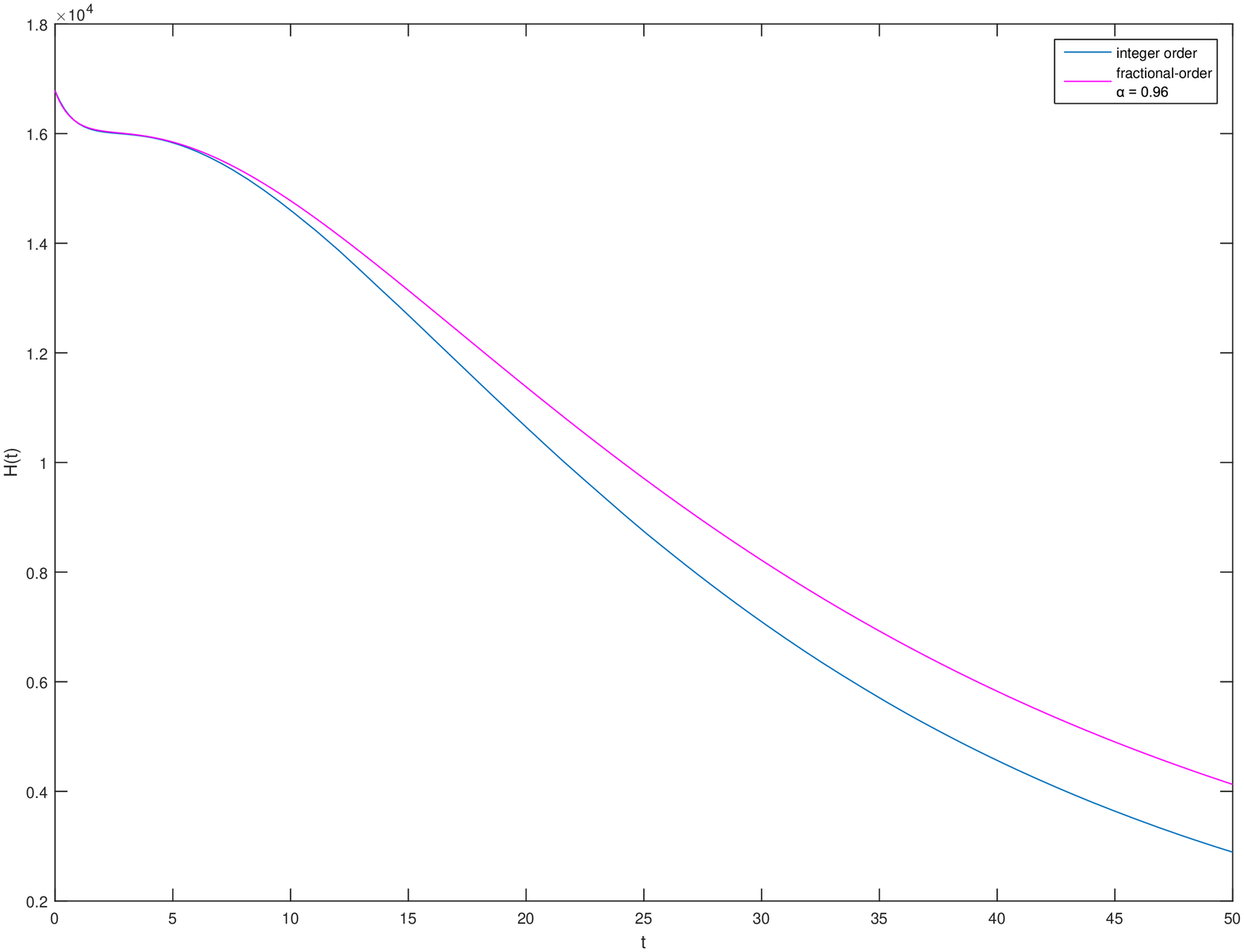}
		\caption{Graphical description of hospitalized cases versus time.}
		\label{f111}
	\end{subfigure}
	\quad 
	\begin{subfigure}[hb]{0.7\linewidth}
		\includegraphics[width=\linewidth]{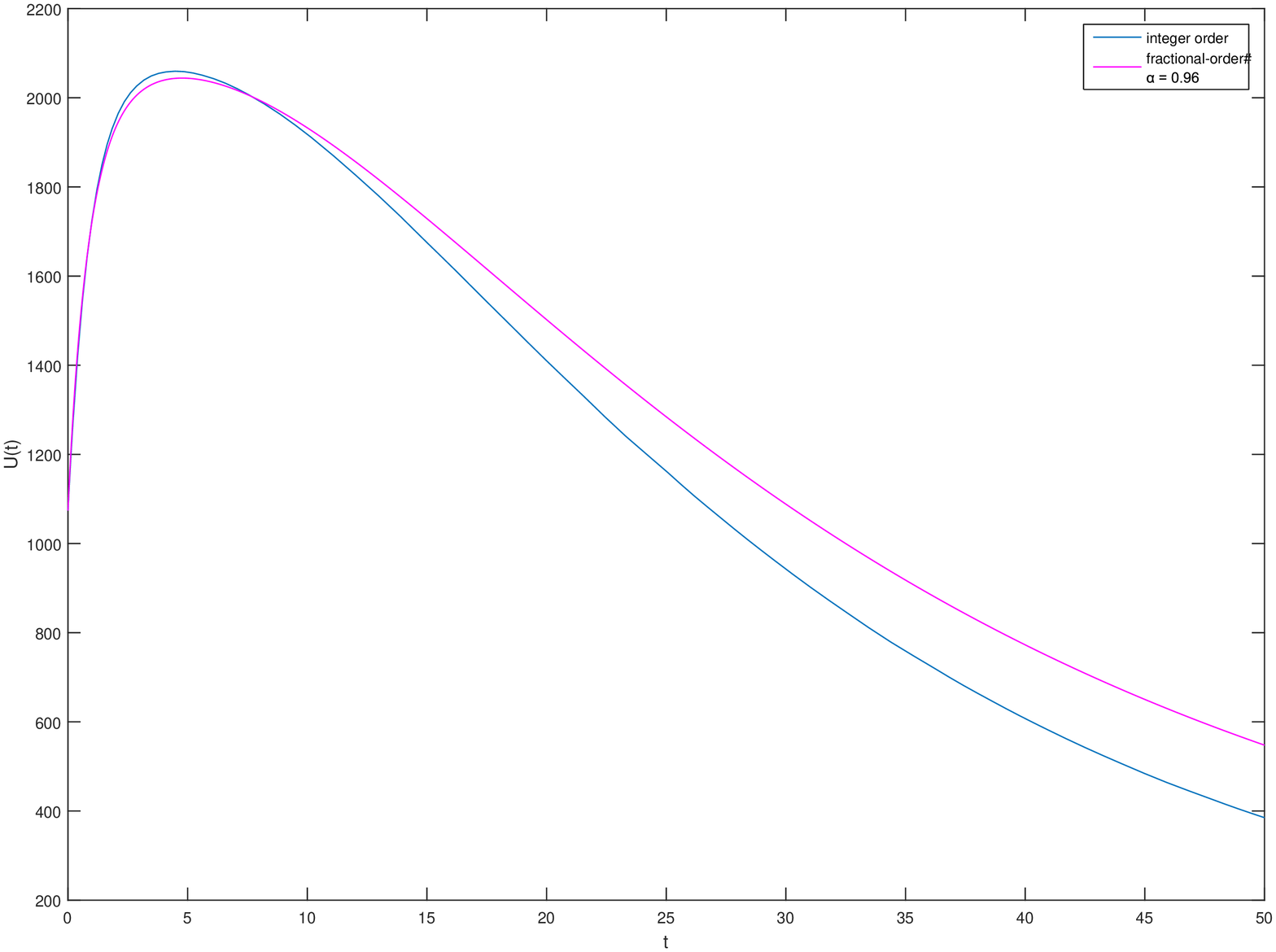}
		\caption{Graphical description of critical care cases versus time.}
		\label{f112}
	\end{subfigure}
	\caption{Simulation results of integer-order and fractional-order SIHUR model for France.}
	\label{global fig101}
\end{figure}


\begin{figure} [h]
	\centering
	\begin{subfigure}[ht]{0.7\linewidth}
		\includegraphics[width=\linewidth]{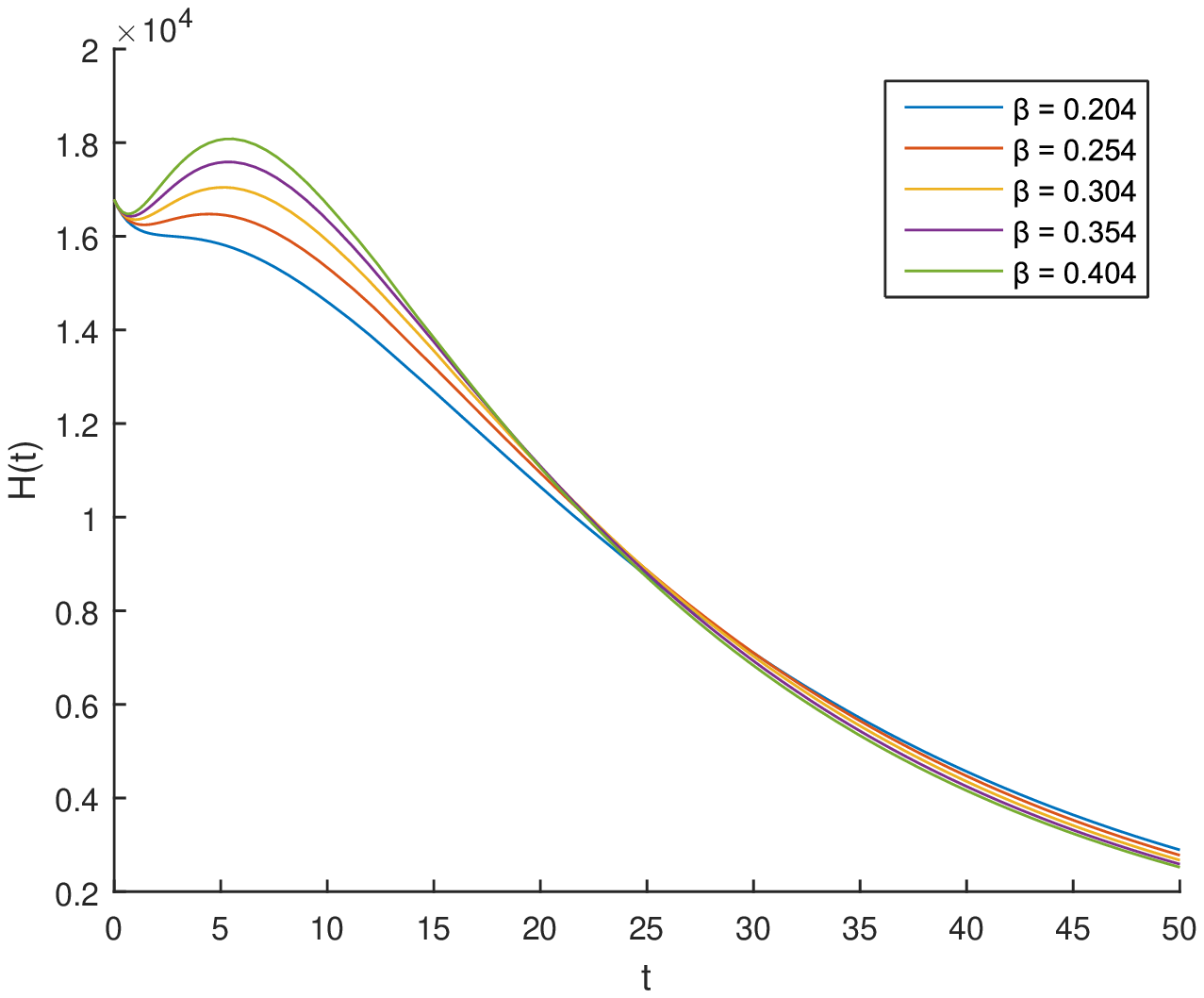}
		\caption{Graphical description of precautionary measures mitigation ($\beta$) on hospitalized cases of COVID-19.}
		\label{f116}
	\end{subfigure}

	\begin{subfigure}[hb]{0.7\linewidth}
		\includegraphics[width=\linewidth]{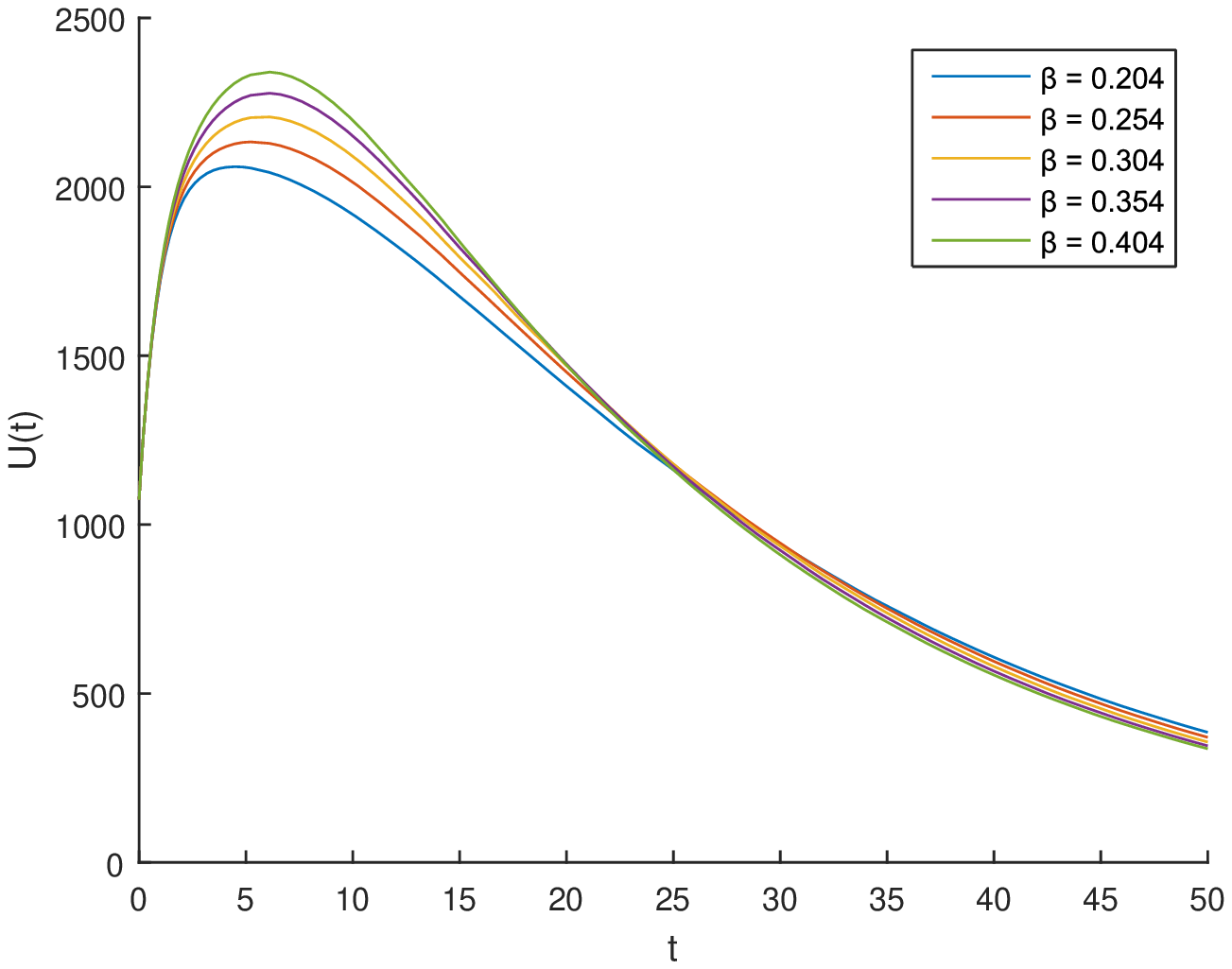}
		\caption{Graphical description of precautionary measures mitigation ($\beta$) on critical care cases of COVID-19.}
		\label{f117}
	\end{subfigure}
	\caption{Impact of precautionary measures mitigation ($\beta$) on hospitalized and ICUs' cases of COVID-19 for France by integer order SIHUR model.}
	\label{global fig103}
\end{figure}


\begin{figure} [h]
	\centering
	\begin{subfigure}[ht]{0.7\linewidth}
		\includegraphics[width=\linewidth]{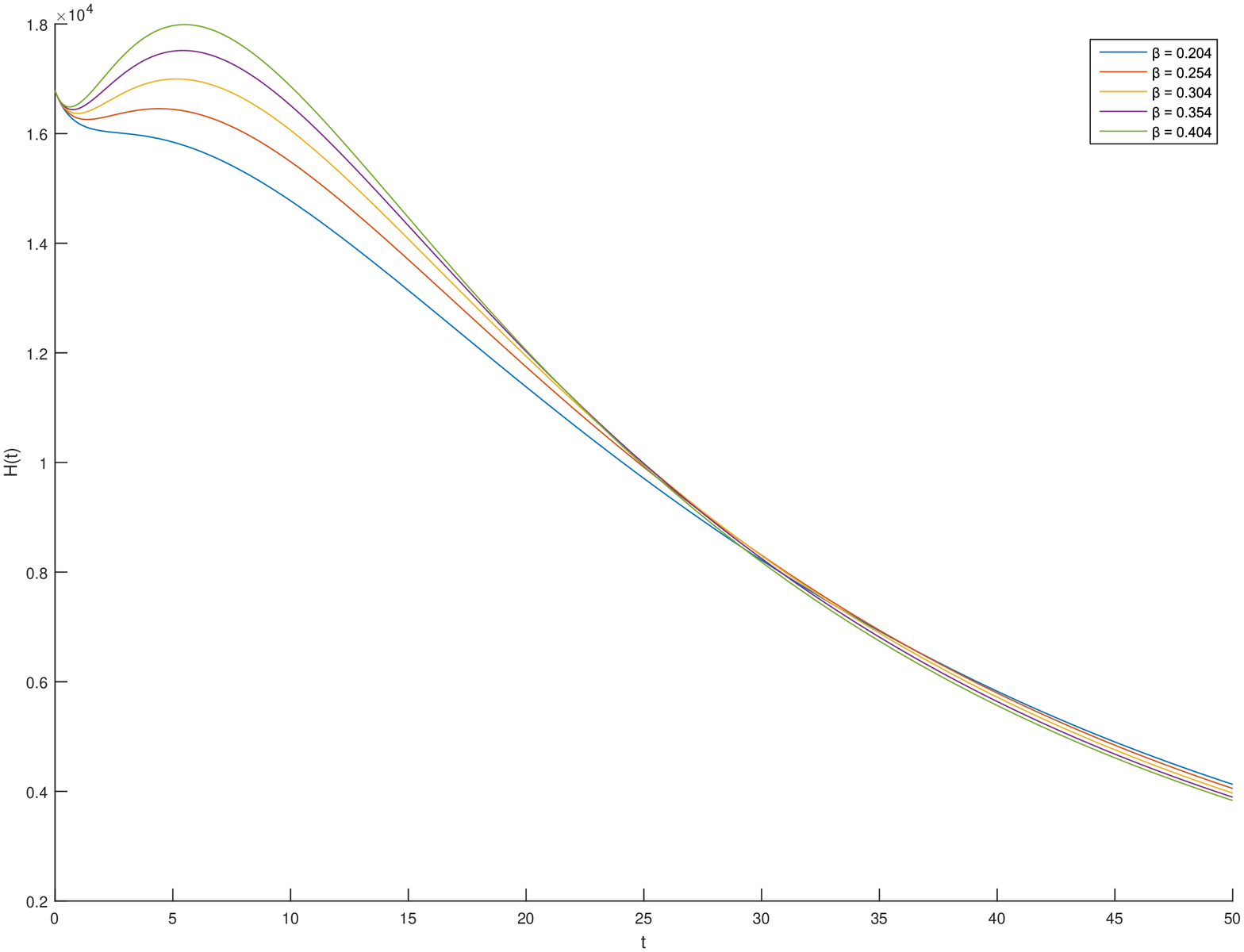}
		\caption{Graphical description of precautionary measures mitigation ($\beta$) on hospitalized cases of COVID-19.}
		\label{f118}
	\end{subfigure}
	
	\begin{subfigure}[hb]{0.7\linewidth}
		\includegraphics[width=\linewidth]{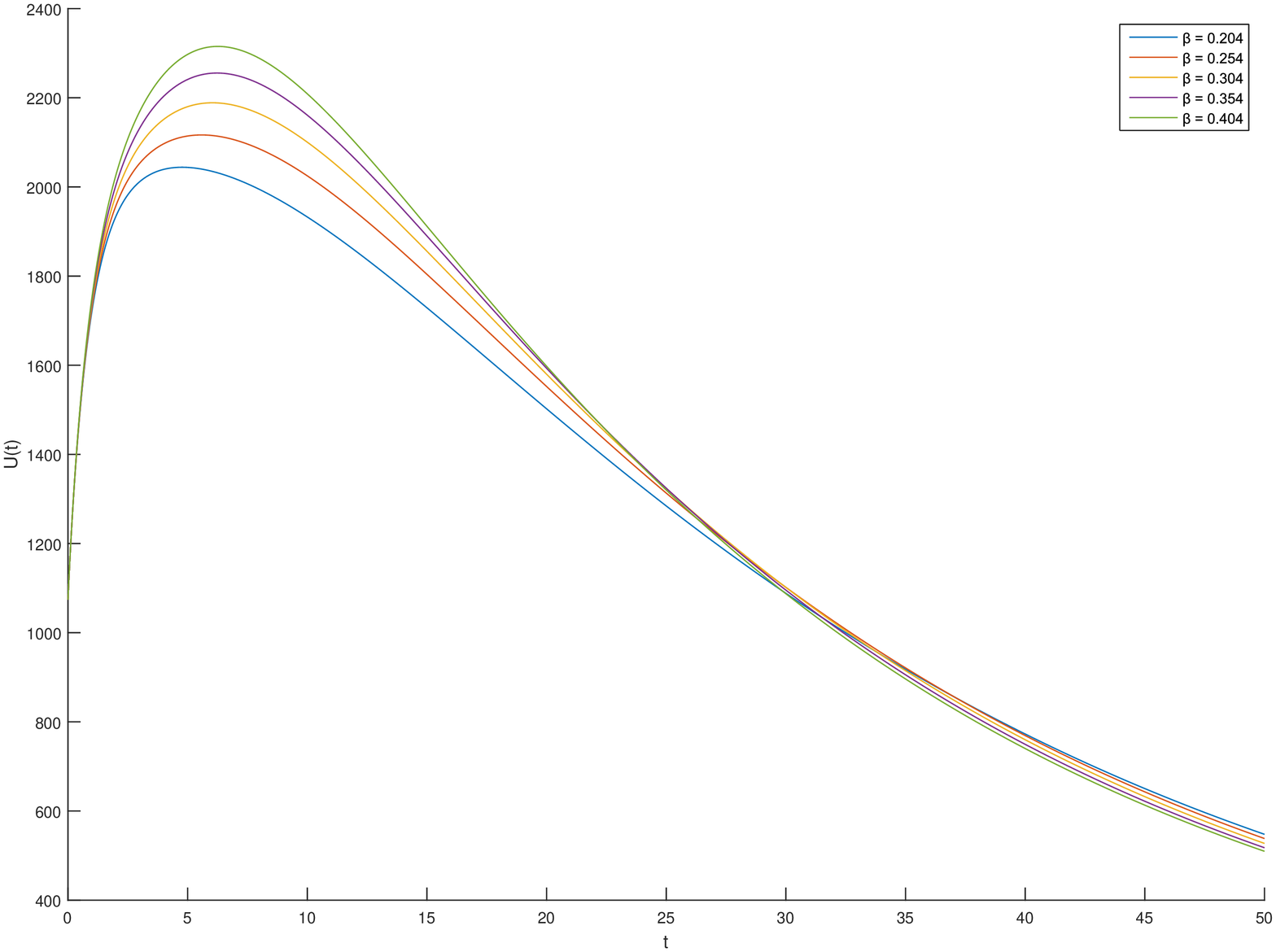}
		\caption{Graphical description of precautionary measures mitigation ($\beta$) on critical care cases of COVID-19.}
		\label{f119}
	\end{subfigure}
	\caption{Impact of precautionary measures mitigation ($\beta$) on hospitalized and ICUs' cases of COVID-19 for France by fractional-order SIHUR model.}
	\label{global fig104}
\end{figure}

\section{Conclusion}\label{113}
In this work, we investigate the problem of COVID-19 breaking out in the entire French hospitals and ICUs. The integer order and fractional-order models are proposed to predict the COVID-19 spread in France from $24^{th}$ May 2022 to $13^{th}$ July 2022. In this paper, with the SIHUR model, we also study the impact of the precautionary measures. We find that the increase in the total number of hospitalized and critical care cases is in the first days of our study. And this is followed by a significant decrease in the cumulative number of COVID-19 cases. Clearly, we show that implementing the precautionary measures has an apparent minor effect on the total number of patients in hospitals and ICUs.


\bibliographystyle{model1-num-names}
\bibliography{ref}
\end{document}